\RecustomVerbatimCommand{\VerbatimInput}{VerbatimInput}%
{fontsize=\footnotesize,
 frame=lines,  
 %
 %
 commentchar=*        
}
\author{Luna Matsuo \& Meziane Privat}
\title{Stage}
\titlespacing*{\paragraph}{0pt}{3.25ex plus 1ex minus .2ex}{1.5ex plus .2ex}
\newtheorem{definition}{Définition}[section]
\newtheorem{prop}{Proposition}[section]
\newenvironment{changemargin}[2]{\begin{list}{}{%
\setlength{\topsep}{0pt}%
\setlength{\leftmargin}{0pt}%
\setlength{\rightmargin}{0pt}%
\setlength{\listparindent}{\parindent}%
\setlength{\itemindent}{\parindent}%
\setlength{\parsep}{0pt plus 1pt}%
\addtolength{\leftmargin}{#1}%
\addtolength{\rightmargin}{#2}%
}\item }{\end{list}}
\begin{document}
\hypersetup{pdfborder=0 0 0}
\begin{center}
{\LARGE \textbf{Infinite horizon for symetric strategy population game}}
\\~~\\ \textbf{Meziane Privat}
\end{center}~~\\
\begin{changemargin}{1.5cm}{1.5cm}
\begin{small}
 \textbf{Abstract : } To predict the behavior of a population game when time becomes very long, the process that characterizes the evolution of our game dynamics must be reversible. Known games satisfying this are 2 strategy games as well as potential games with an exponential protocol. We will try to extend the study of infinite horizons for what are called symetric strategy games.\\
 
 \textbf{Key words :} Population Games,  Stochastic Evolutionary Models,  Infinite-Horizon.\\
 
 \textbf{AMS subject classification :} 91A15,  91A22,  92D25
 \end{small} 
 \end{changemargin}~~\\
\tableofcontents
\newpage

\addcontentsline{toc}{section}{Introduction}
\section*{Introduction}
In this introduction we define the notations that we use for the rest of the article, we use the same notations as in \cite{San10}.\\

A population game can be thought of as game theory where agents are populations. From a purely formal point of view, a population game is defined by an arbitrary collection of real-valued continuous functions on an appropriate domain.\\
Population games model many strategic interactions with the following properties:
\begin{itemize}
\item[(i)] The number of agents is important.
\item[(ii)] Individual agents are small. The behavior of one agent has little or no effect on the earnings of other agents.
\item[(iii)] Agents interact anonymously. The gains of each agent depend only on the behavior of the adversaries through the distribution of their choices.
\end{itemize}
Although these three properties are fundamental, two additional restrictions of a more technical nature must be taken into account. For the distribution of opponents' choices mentioned in property (iii) to exist, there must be collections of agents that share the same set of strategies. This structure and many others are ensured by the following property:
\begin{itemize}
\item[(iv)] The number of roles is finite. Each agent is a member of a finite number of populations. Members of a population choose from the same set of strategies, and their payoffs are identical functions of their own behavior and the distribution of the behavior of opponents.
\end{itemize}
The last property ensures that very small changes in the overall behavior do not lead to large changes in the payoffs of the strategies:
\begin{itemize}
\item[(v)] Gains are continuous. The dependence of each agent's payoffs on the distribution of opponents' choices is continuous.
\end{itemize}

Let $\mathcal{P} = \lbrace 1, ... ,\mathfrak{p}\rbrace$ be a society composed of $\mathfrak{p} \geq 1$ populations of agents. An agent in the population $p$ forms a continuous mass $m^p > 0$. (Thus, $\mathfrak{p}$ is the number of populations, while $p$ is an arbitrary population.)\\

\begin{definition} The set of strategies available to agents of population $p$ is denoted $S^p = \lbrace 1, ... , n^p\rbrace$. The total number of pure strategies in all populations is denoted $n=\sum\limits_{p\in\mathcal{P}}n^p$.
\end{definition}

\begin{definition} The set of population states (or strategy distributions) for population $p$ is $X^p = \lbrace x^p\in\mathbb{R}^{n^p}_+ : \sum\limits_{i\in S^p}x_i^p = m^p\rbrace$. The scalar $x_i^p\in\mathbb{R}_+$ represents the mass of players in the population $p$ choosing the strategy $i \in S^p$.
\end{definition}

\begin{definition} A payoff function $F : X \rightarrow\mathbb{R}^n$ is a continuous function that assigns each social state a payoff vector, one for each strategy in each population.
\end{definition}
$F^p_i : X \rightarrow \mathbb{R}$ is the payoff function for strategy $i \in S^p$.\\

$F^p : X \rightarrow \mathbb{R}^{n^p}$ denotes the payoff functions for all strategies in $S^p$.\\

\begin{definition} A game with payoffs defined on the positive orthant is called a complete population game, i.e. if one of the following two conditions is satisfied:\begin{itemize}
\item[(i)] $X=\lbrace x\in\mathbb{R}^n : \sum\limits_{k\in S} x_k=1\rbrace$
\item[(ii)] $X=\lbrace x=(x^1,...,x^\mathcal{P})\in\mathbb{R}^n_+: \sum\limits_{i\in S^p} x_i^p=m^p\rbrace$
\end{itemize}
\end{definition}

The procedures that agents follow to decide when to change policies and which policies to switch to are called review protocols.
\begin{definition} A revision protocol $\rho^p$ is an application $\rho^p: \mathbb{R}^{n^p} \times X^p \rightarrow \mathbb{R}_+^ {n^p\times n^p}$. The scalar $\rho^p_{ij}(\pi^p, x^p)$ is called the conditional passing rate from strategy $i \in S^p$ to strategy $j \in S^p$ given the gain vector $\pi^p$ and the population state $x^p$.
\end{definition}
We write $\rho^p_{ij}(\pi^p)$ instead of $\rho^p_{ij}(\pi^p, x^p)$ when the conditional switching rates do not directly depend of $x^p$.\\

\begin{definition}
The mean dynamic associated with the population game $F$ and the protocol $\rho$ is the ordinary differential equation:\begin{center}
$\dot{x}_i=\sum\limits_{j\in S}x_j\rho_{ji}-x_i\sum\limits_{j\in S}\rho_{ij}$
\end{center}
\end{definition}

We can therefore define our Markovian process more formally.
\begin{prop}
Let a population set $F$, a revision protocol $\rho$, a revision opportunity rate $R$, and a population size $N$ define a Markov process $(X^N)$ on the $\mathcal{X}^N$ state space. Let as initial state $X^N_0=x^N_0$ then its jump rate is $\lambda^N_x=NR$ (i.e. for each population $p$, $\lambda^{N^p}_x=N^pR ^p$) and its transition probabilities n are: \begin{center}
$P^{N}_{x,x+z}=\left\lbrace\begin{array}{l l}
\frac{x_i\rho_{ij}(F(x),x)}{R_{ij}}$ if $z=\frac{1}{N_{ij}}(e_j-e_i), \exists~ p $ such that $i,j\in S^p, i\neq j\\
1-\sum\limits_{i\in S}\sum\limits_{j\neq i}\frac{x_i\rho_{ij}(F(x),x)}{R_{ij}}$ if $z =0\\
0$ otherwise $
\end{array}\right.$
\end{center}
\end{prop}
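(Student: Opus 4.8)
The plan is to build the process $(X^N)$ from the elementary description of how individual agents revise, and then read off both the aggregate jump rate and the transition probabilities from the resulting superposition of Poisson clocks. First I would attach to each of the $N^p$ agents of population $p$ an independent Poisson clock of rate $R^p$, governing the instants at which that agent reconsiders its strategy. Since the $N^p$ clocks of population $p$ are independent and each has rate $R^p$, their superposition is a Poisson process of rate $N^p R^p$; superposing again over the $\mathfrak{p}$ populations yields a global revision clock of rate $\sum_{p} N^p R^p = NR$. This is exactly the claimed jump rate $\lambda^N_x = NR$, the per-population identity $\lambda^{N^p}_x = N^p R^p$ being the intermediate step.

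Second, I would identify the embedded jump chain. By the memorylessness of the exponential clocks and the competing-clocks (thinning) property, conditional on a revision opportunity occurring in state $x$, the probability that it belongs to a given agent is proportional to that agent's rate. Hence the opportunity lands on an agent currently playing strategy $i \in S^p$ with probability proportional to $R^p$ times the number of such agents, which in the discretized state is governed by $x_i^p$; conditional on that agent playing $i$, it switches to $j$ with probability $\rho^p_{ij}(F(x),x)/R^p$, producing the increment $z = \tfrac{1}{N_{ij}}(e_j - e_i)$. Multiplying the selection probability by the switch probability and normalizing by the global rate $NR$ collapses, after cancellation of the $R^p$ and population-size factors, to the stated expression $x_i\,\rho_{ij}(F(x),x)/R_{ij}$; the residual mass left on the event ``no effective change'' gives the $z=0$ entry, and increments that do not correspond to a single admissible switch receive probability $0$.

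Third, I would verify that the displayed family is a genuine transition kernel: the off-diagonal entries are nonnegative because $\rho \geq 0$, and the diagonal entry is defined precisely so that each row sums to one, which is automatic once the off-diagonal terms are recognized as the conditional probabilities of the finitely many possible switches. The one point requiring care is nonnegativity of the $z=0$ entry, i.e. that $\sum_{i \in S}\sum_{j \neq i} x_i\,\rho_{ij}(F(x),x)/R_{ij} \le 1$; this is where the normalization built into the revision-opportunity rate $R$ must be invoked, namely that $R^p$ dominates the total outflow rate $\sum_{j} \rho^p_{ij}$ from any strategy, so that $\rho^p_{ij}/R^p$ is a bona fide probability.

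I expect the main obstacle to be the bookkeeping of the multi-population discretization: reconciling the three scalings that appear in the statement — the mass normalization $\sum_{i \in S^p} x_i^p = m^p$, the grid size $\tfrac{1}{N_{ij}}$ of admissible increments, and the population-specific rates $R^p$ — so that the conditional selection probability of an $i$-agent really does reduce to the factor $x_i/R_{ij}$ after all cancellations. Once the correspondence between ``one agent switching from $i$ to $j$'' and ``the state moving by $\tfrac{1}{N_{ij}}(e_j-e_i)$'' is pinned down, the Markov property itself is immediate from the independence and memorylessness of the clocks, and the remainder is the routine verification of the kernel described above.
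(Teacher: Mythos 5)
The paper itself contains no proof of this proposition: it is stated in the introduction as the \emph{definition} of the evolutionary process, imported essentially verbatim from \cite{San10} (Sandholm's Chapter 10 construction), and the text moves on immediately. So there is no argument in the paper to compare yours against; what can be judged is your attempt on its own merits, and there it is the standard and essentially correct construction — independent exponential clocks of rate $R^p$ for each agent, superposition to obtain the aggregate jump rate, identification of the embedded jump chain via the competing-clocks property, and the verification that the kernel is stochastic, with nonnegativity of the $z=0$ entry resting on the normalization $\sum_{j\neq i}\rho^p_{ij}\leq R^p$. This is precisely how the cited reference builds the process, so your proof supplies exactly the argument the paper implicitly appeals to.

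One step deserves more care than you give it. In the multi-population case, the global embedded chain assigns to a switch $i\to j$ in population $p$ the probability $\frac{N^p x_i^p}{\sum_q N^q R^q}\,\rho^p_{ij}(F(x),x)$ (selection of the revising agent is proportional to $N^p R^p$, and the switch probability is $\rho^p_{ij}/R^p$, so the $R^p$ cancels but the population-size weighting does not). This reduces to the stated $x_i\rho_{ij}/R_{ij}$ in the single-population case, or under the per-population reading of the process that the paper itself signals by writing $\lambda^{N^p}_x=N^pR^p$, but not for the global chain with a single clock. You flag this bookkeeping as ``the main obstacle'' yet leave the cancellation asserted rather than checked; making that reconciliation explicit is the only substantive thing missing from your write-up, and in fairness the looseness mirrors a genuine ambiguity in the statement itself, which never defines $N_{ij}$ or $R_{ij}$.
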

$\forall i\in S$, $e_i\in\mathcal{X}^N$ is the state where all agents who can play strategy $i$.

\begin{prop} \textbf{(Deterministic approximation of $(X^N_t)$)} Let $(X^N_t)_{N\geq N_0}$ be the continuation of the evolutionary processes
stochastics defined above. Suppose that $V=V^N$ is Lipschitz and continuous. Let the initial conditions $X^N_0=x^N_0$ converge to the state $x_0\in X$, and let $(x_t)_{t\geq 0}$ be the solution of the average dynamics $(M)$ starting from $x_0$. So for all $T < \infty$ and $\varepsilon > 0$ we have:\begin{center}
$\lim\limits_{N\rightarrow\infty_{\mathbb{N}^\mathfrak{p}}}\mathbb{P}(\sup\limits_{t\in [0,T]}\vert X^N_t -x_t\vert <\varepsilon)=1$
\end{center}
\end{prop}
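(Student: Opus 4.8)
The plan is to recognize this as the classical Kurtz-type deterministic approximation (a law-of-large-numbers for density-dependent Markov jump processes) and to prove it by the standard three-step scheme: a martingale decomposition of $X^N_t$ whose drift is exactly the mean-dynamics vector field, a maximal-inequality control of the martingale fluctuations that vanish as $N\to\infty$, and a Grönwall comparison with the deterministic trajectory $x_t$ using the Lipschitz hypothesis.

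First I would write the jump process in its random time-change (Poisson) representation. Using the transition probabilities and jump rate $\lambda^N_x = NR$ from the preceding proposition, each admissible move is of the form $z=\frac{1}{N_{ij}}(e_j-e_i)$ with rate proportional to $N\,\beta_{ij}(x)$, where $\beta_{ij}(x)=x_i\rho_{ij}(F(x),x)$. With independent unit-rate Poisson processes $Y_{ij}$ this gives
\begin{equation*}
X^N_t = X^N_0 + \sum_{i\neq j}\frac{e_j-e_i}{N_{ij}}\,Y_{ij}\!\left(N\int_0^t \beta_{ij}(X^N_s)\,ds\right).
\end{equation*}
Compensating each $Y_{ij}$ by its mean and collecting terms, the drift that appears is precisely $V_i(x)=\sum_{j}x_j\rho_{ji}-x_i\sum_j\rho_{ij}$, i.e.\ the right-hand side of the mean dynamic $(M)$. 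Thus I obtain the decomposition
\begin{equation*}
X^N_t = X^N_0 + \int_0^t V(X^N_s)\,ds + M^N_t,
\end{equation*}
where $M^N_t$ is a martingale built from the centered Poisson processes $\tilde Y_{ij}(u)=Y_{ij}(u)-u$.

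The hard part, and the heart of the argument, will be controlling $M^N_t$. Since the jumps have size $O(1/N)$ while the rates are $O(N)$, I expect the predictable quadratic variation to scale like $1/N$: writing it out,
\begin{equation*}
\langle M^N\rangle_t = \sum_{i\neq j}\left|\frac{e_j-e_i}{N_{ij}}\right|^2 N\int_0^t \beta_{ij}(X^N_s)\,ds = O\!\left(\tfrac{1}{N}\right),
\end{equation*}
the bound being uniform in $t\le T$ because $F$ and $\rho$ are continuous on the compact state space $X$, so the $\beta_{ij}$ are bounded there. Doob's maximal inequality then gives $\mathbb{E}\big[\sup_{t\le T}|M^N_t|^2\big]\le 4\,\mathbb{E}[\langle M^N\rangle_T]=O(1/N)$, whence $\sup_{t\le T}|M^N_t|\to 0$ in probability.

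Finally I would close the estimate by Grönwall. Setting $e^N(t)=X^N_t-x_t$ and subtracting the integral form of $(M)$ from the decomposition above,
\begin{equation*}
e^N(t) = (X^N_0-x_0) + \int_0^t\big(V(X^N_s)-V(x_s)\big)\,ds + M^N_t,
\end{equation*}
and the Lipschitz bound $|V(X^N_s)-V(x_s)|\le L\,|e^N(s)|$ followed by Grönwall's lemma yields
\begin{equation*}
\sup_{t\le T}|e^N(t)| \le \Big(|X^N_0-x_0| + \sup_{t\le T}|M^N_t|\Big)e^{LT}.
\end{equation*}
Both terms on the right tend to $0$ in probability — the first by the convergence $x^N_0\to x_0$, the second by the martingale estimate — so the left-hand side does too, which is exactly the claimed statement. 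The only delicate point beyond the martingale bound is ensuring the trajectories stay in a region where $V$ is Lipschitz and the rates bounded; the compactness of $X$ handles this.
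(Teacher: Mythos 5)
You should know that the paper itself contains no proof of this proposition: it is stated in the introduction as background material, imported from Sandholm \cite{San10} (it is the finite-horizon deterministic approximation theorem there, going back to Kurtz and to Bena\"im--Weibull), so there is no internal argument to compare yours against. Your proposal is precisely the canonical proof of that theorem: the Poisson random time-change representation with jump rates $N x_i \rho_{ij}(F(x),x)$, the compensated decomposition whose drift is exactly the mean-dynamic field $V$, the $O(1/N)$ bound on the predictable quadratic variation combined with Doob's $L^2$ maximal inequality, and the closing Gr\"onwall comparison using the Lipschitz hypothesis on $V$. The argument is sound; the only details left implicit are routine ones — the compensated time-changed Poisson processes are orthogonal martingales (almost surely no common jumps), so the quadratic variation of their sum is the sum of the quadratic variations, and Doob's inequality should be applied componentwise or to the Euclidean norm — and, as you note, the compactness and invariance of the simplex keep the rates bounded uniformly, so no trajectory can escape the region where the estimates hold.
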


In finite horizon analyses, the basic object of study is the mean dynamic $(M)$, an ordinary differential equation derived from the Markov process $(X^N_t)$. In infinite horizon analyses, the corresponding object is the stationary distribution $\mu^N$ of the process $(X^N_t)$. A stationary distribution is defined by the property that a process whose initial condition is described by this distribution will continue to be described by this distribution at all future times.\\

We recall that $(X^N_t)$ is the Markov chain defined as in Proposition 0.1. To introduce the possibility of infinite horizon predictions, we now further assume that the conditional rates of change are strictly positive i.e. there is a constant $\underline{R}\in\mathbb{R} _+$ such that:
\begin{center}
~\hfill$\rho_{ij}(F(x), x) \geq \underline{R}~\forall~i, j \in S, x \in X$\hfill (1)
\end{center}
\begin{definition} A fully supported revision protocol is a protocol that satisfies condition (1).
\end{definition}
Full-support review protocols usually include some form of perturbation to ensure that all strategies are always chosen with positive probability. 
\section{The case one population}

In \cite{San10},  we are taught that the infinite horizon is known for only two types of population games: exponential games and two-strategy games, we will extend this to what are called symmetric population games
\subsection{Informal presentation with an example}

First, we will consider the case where there is only one population. \\

Imagine a population of $N$ agents playing a three-states game: state 1, state2 or state 3. This is a game with one population and three strategies, but it can be transformed into a game with three populations and two strategies. Imagine three populations of $N$ agents each (as if we were cloning the initial population), the first population would play a two-state game which would be: do strategy 1 or do not strategy 1, the second would play: do strategy 2 or do not strategy 2 and the third would play: do strategy 3 or do not strategy 3. The possible states would only be those where $N$ agents choose an "odd" strategy and $2N$ agents choose an "even" strategy. We have thus transformed the game into a two-strategy game that we know is reversible.\\
To switch from the initial game to the new game, you have to make small changes.\\

Let $A$ be an object concerning the initial game, its transformation for the new game will be $A^*$\\

In this case, \begin{itemize}
\item[•] $x=(x_1,x_2,x_3)\in X$ becomes $x^*=(x_1, 1-x_1,x_2,1-x_2,x_3,1-x_3)\in X^*$
\item[•] $S=\lbrace a, b, c\rbrace$ becomes $S^*=\lbrace a, \overline{a},b,\overline{b},c,\overline{c}\rbrace$
\item[•] $\mathfrak{p}=1$ becomes $\mathfrak{p}^*=3$
\item[•] $F(x)=(\alpha,\beta,\gamma)$ becomes $F^*(x^*)=(\alpha,\frac{x_2\beta+x_3\gamma}{x_2 +x_3},\beta,\frac{x_1\alpha+x_3\gamma}{x_1+x_3},\gamma,\frac{x_1\alpha+x_2\beta}{x_1+x_2})$ or \\ $F ^*(x^*)=(\alpha,0,\beta,0,\gamma,0)$
\item[•] $R$ becomes $R^*=(R,R,R)$
\item[•] \begin{align*}
\rho(\pi,x)&=\left(\begin{array}{c c c c c}
\rho_{a,a}(\pi,x) & \rho_{a,b}(\pi,x) & \rho_{a,c}(\pi,x) \\
\rho_{b,a}(\pi,x) & \rho_{b,b}(\pi,x) & \rho_{b,a}(\pi,x) \\
\rho_{c,a}(\pi,x) & \rho_{c,b}(\pi,x) & \rho_{c,c}(\pi,x) \\
\end{array}\right)
\end{align*}
Becomes
\begin{small}
\begin{align*}
\rho^*(\pi^*,x^*) = & \left(\begin{array}{c c c c c c}
\rho^*_{a,a}(\pi^*,x^*) & \rho^*_{a,\overline{a}}(\pi^*,x^*) & 0 & 0 & 0 & 0\\
\rho^*_{\overline{a},a}(\pi^*,x^*) & \rho^*_{\overline{a},\overline{a}}(\pi^*,x ^*) & 0 & 0 & 0 & 0\\
0&0&\rho^*_{b,b}(\pi^*,x^*) & \rho^*_{b,\overline{b}}(\pi^*,x^*) & 0 & 0 \\
0&0&\rho^*_{\overline{b},b}(\pi^*,x^*) & \rho^*_{\overline{b},\overline{b}}(\pi^*, x^*) & 0 & 0\\
0&0&0&0&\rho^*_{c,c}(\pi^*,x^*) & \rho^*_{c,\overline{c}}(\pi^*,x^*)\\
0&0&0&0&\rho^*_{\overline{c},c}(\pi^*,x^*) & \rho^*_{\overline{c},\overline{c}}(\pi^*, x^*)\\
\end{array}\right)
\
\\
= &\left(\begin{array}{c c}
\rho_{a,a}(\pi,x) & \rho_{a,b}(\pi,x) + \rho_{a,c}(\pi,x)\\
\frac{1}{2}(\rho_{b,a}(\pi,x) + \rho_{c,a}(\pi,x)) & \frac{1}{2}(\rho_{ b,b}(\pi,x)+\rho_{b,c}(\pi,x)+\rho_{c,b}(\pi,x)+\rho_{c,c}(\pi, x))\\
0&0\\
0&0\\
0&0\\
0&0 \\
\end{array}\right.\\
& \left.\begin{array}{c c}
0&0\\
0&0\\
\rho_{b,b}(\pi,x) & \rho_{b,a}(\pi,x) + \rho_{b,c}(\pi,x)\\
\frac{1}{2}(\rho_{a,b}(\pi,x) + \rho_{c,b}(\pi,x))& \frac{1}{2}(\rho_{ a,a}(\pi,x)+\rho_{a,c}(\pi,x)+\rho_{c,a}(\pi,x)+\rho_{c,c}(\pi, x)) \\
0&0\\
0&0\\
\end{array}\right.\\
& \left.\begin{array}{c c}
0&0 \\
0&0 \\
0&0 \\
0&0 \\
\rho_{c,c}(\pi,x) & \rho_{c,a}(\pi,x) + \rho_{c,b}(\pi,x) \\
\frac{1}{2}(\rho_{a,c}(\pi,x) + \rho_{b,c}(\pi,x)) & \frac{1}{2}(\rho_{a,a}(\pi,x)+\rho_{a,b}(\pi,x) +\rho_{b,a}(\pi,x)+\rho_{b,b}(\pi,x)) \\
\end{array}\right)
\end{align*}
\end{small}
\end{itemize}
We then know by theorem 11.2.3 of \cite{San10} what shape should have the stationary distribution of the game and therefore its infinite horizon.\\
\subsection{Formal presentation}

In this part we will lay the mathematical foundations to have access to the behavior when the time becomes very long for a game with $3$ strategies.\\
\begin{definition} $A$ is a symetric game if for all $i,j\in S$ we have $\rho_{ij}(\pi,x)=\rho_{ji}(\pi,x)$ and $R_{ij}=R_{ji}$.
\end{definition}
Let $A$ be a game of a population with $3$ strategies that we will call the initial game, in connection with this game we then pose:
\begin{itemize}
\item[•] $X$ the state space of $A$
\item[•] $S=\lbrace 1,2,3\rbrace$ the set of $3$ strategies in $A$
\item[•] We have $\mathfrak{p}=1$
\item[•] Let $F$ be its gain function
\item[•] We set $\rho$ its full support revision protocol
\item[•] $R$ its revision rate
\item[•] $N$ the number of agents in the population
\end{itemize}
\begin{prop} If for all $i,j\in S$ we have $\rho_{ij}(\pi,x)=\rho_{ji}(\pi,x)$ and $R_{ij}=R_{ji}$, then for any population game at $3$ strategies there is a unique population game with two strategies and $3$ population describing the same game.
\end{prop}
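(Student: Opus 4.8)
The plan is to prove the statement constructively, by exhibiting the explicit correspondence sketched in the informal example of the preceding subsection and then verifying that it yields a well-defined two-strategy, three-population game, and that this game is unique. First I would set up the map on state spaces. Given a three-strategy game $A$ with state $x=(x_1,x_2,x_3)\in X$, I define the associated three-population, two-strategy state by cloning the population: for each strategy $k\in S$ I introduce a population $p_k$ with two strategies ``play $k$'' and ``do not play $k$'', and I set $x^*=(x_1,1-x_1,x_2,1-x_2,x_3,1-x_3)$. The symmetry hypothesis $\rho_{ij}=\rho_{ji}$ and $R_{ij}=R_{ji}$ is what guarantees that the off-block aggregation used to define $\rho^*$ is consistent, so the bulk of the argument is checking that the block-diagonal protocol $\rho^*$ displayed in the example is the \emph{only} choice compatible with reproducing the mean dynamic of $A$.

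The key steps, in order, would be the following. First, I would write down the mean dynamic of the original game $A$ using Définition~0.6, obtaining $\dot{x}_i=\sum_{j}x_j\rho_{ji}-x_i\sum_{j}\rho_{ij}$ for $i\in\{1,2,3\}$. Second, I would write down the mean dynamic of the candidate two-strategy game in each of the three populations $p_k$: since each population has only two strategies, its dynamic is governed by a single scalar equation for the mass $x_k$ playing $k$, namely $\dot{x}_k=(1-x_k)\rho^*_{\overline{k},k}-x_k\rho^*_{k,\overline{k}}$. Third, I would impose that these two families of equations coincide for all $x\in X$; matching the outflow term $x_k\sum_{j\neq k}\rho_{kj}$ against $x_k\rho^*_{k,\overline{k}}$ forces $\rho^*_{k,\overline{k}}=\sum_{j\neq k}\rho_{kj}$, and matching the inflow term $\sum_{j\neq k}x_j\rho_{jk}$ against $(1-x_k)\rho^*_{\overline{k},k}=(\sum_{j\neq k}x_j)\rho^*_{\overline{k},k}$ forces, under the symmetry $\rho_{jk}=\rho_{kj}$, the averaged expression $\rho^*_{\overline{k},k}=\frac{1}{\,\sum_{j\neq k}x_j\,}\sum_{j\neq k}x_j\rho_{jk}$ that appears in the example (the factor $\tfrac12$ in the displayed matrix being exactly this average when the two remaining masses are symmetric). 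Fourth, I would verify that $\rho^*$ still satisfies full support, i.e. condition~(1), using that each entry is a positive combination of the strictly positive original rates, and that $R^*=(R,R,R)$ respects $R^*_{ij}=R^*_{ji}$; this confirms the constructed object is a legitimate game of the required type.

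For uniqueness I would argue that any two-strategy, three-population game describing ``the same game'' must induce the same Markov process $(X^N_t)$, hence by Proposition~0.2 the same mean dynamic in the $N\to\infty$ limit; since in a two-strategy population the transition rates $\rho^*_{k,\overline{k}}$ and $\rho^*_{\overline{k},k}$ are determined pointwise by the scalar mean-dynamic equation above (two unknowns, but each fixed by the separately-matched inflow and outflow terms), the rates are forced, and uniqueness follows. I expect the main obstacle to be making the phrase ``describing the same game'' mathematically precise: the natural reading is equality of the induced mean dynamics on the embedded state space, but one must check that the embedding $x\mapsto x^*$ is a bijection onto the relevant slice of $X^*$ (the states where each population-$k$ mass is consistent with a single underlying $x$) and that the dynamics are intertwined by this embedding, so that ``same game'' is unambiguous. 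A secondary subtlety is the non-uniqueness of the payoff transformation $F^*$ (the excerpt already offers two different $F^*$), which I would handle by noting that payoffs enter the dynamic only through the composite rates $\rho^*(F^*(x^*),x^*)$, so the claim concerns uniqueness of the game as a dynamical system rather than uniqueness of the $(F^*,\rho^*)$ pair separately.
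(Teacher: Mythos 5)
Your proposal reads the undefined phrase ``describing the same game'' as ``inducing the same mean dynamic'' and then tries to \emph{derive} the rates $\rho^*$ from that requirement; this is genuinely different from the paper, which simply posits the construction and never verifies any dynamical equivalence. Unfortunately the derivation step breaks down. Matching the inflow terms forces $\rho^*_{\overline{k},k}=\bigl(\sum_{j\neq k}x_j\rho_{jk}\bigr)/\bigl(\sum_{j\neq k}x_j\bigr)$, a mass-weighted average, and for $k=1$ this differs from the paper's rate $\tfrac12(\rho_{21}+\rho_{31})$ by exactly $\tfrac12(x_2-x_3)(\rho_{31}-\rho_{21})/(x_2+x_3)$ after normalization: the two agree only when $x_2=x_3$ or $\rho_{21}=\rho_{31}$. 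Your parenthetical remark that the factor $\tfrac12$ ``is exactly this average when the two remaining masses are symmetric'' concedes this but does not repair it; on most of the state space your definition of ``same game'' and the paper's construction are incompatible, so you cannot simultaneously keep your reading of the statement and certify the displayed $\rho^*$. Worse, the weighted average depends on $x_2$ and $x_3$ separately, which are not functions of population $1$'s own state $(x_1,1-x_1)$; by Définition 0.5 a revision protocol for population $k$ is a map $\rho^p:\mathbb{R}^{n^p}\times X^p\rightarrow\mathbb{R}_+^{n^p\times n^p}$, so the object your matching forces is not an admissible protocol at all. (It could only re-enter through the payoff $F^*$, but then the rates are no longer pinned down and your uniqueness-of-rates claim dissolves.)

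The uniqueness argument has a second, independent gap. The scalar mean dynamic $\dot{x}_k=(1-x_k)\rho^*_{\overline{k},k}-x_k\rho^*_{k,\overline{k}}$ determines only the difference of the two terms, not each one: replacing $\rho^*_{\overline{k},k}$ by $\rho^*_{\overline{k},k}+x_k g$ and $\rho^*_{k,\overline{k}}$ by $\rho^*_{k,\overline{k}}+(1-x_k)g$, for any admissible $g\geq 0$, leaves the dynamic unchanged, so ``each fixed by the separately-matched inflow and outflow terms'' is an assumption, not a consequence of equality of dynamics. To force the rates you must compare the games at the level of the Markov process itself (Proposition 0.1, whose transition probabilities expose each rate individually); passing to the mean dynamic via Proposition 0.2, as you do, is strictly lossy and destroys identifiability. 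Note also that the paper's own ``unicity'' is a different and weaker claim than yours: it does not show that no other two-strategy game describes the same dynamics, but only that $A$ can be recovered from $A^*$, by inverting the construction with the symmetry hypothesis via $\rho_{ij}=\tfrac12\bigl(\rho^*_{i,\overline{i}}+\rho^*_{j,\overline{j}}-2\rho^*_{\overline{k},k}\bigr)$. Your reading of the proposition is the more faithful one, but the proof you propose for it does not go through as written.
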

\begin{proof} To demonstrate this we must first build this new game, ie show its existence, and then show its uniqueness. \\

\textbf{Existence:} Let $A^*$ be a two-strategy game of $n$ population defined as follows: \begin{itemize}
\item[•] $X^*=\lbrace x^*=(x_1,1-x_1,x_2,1-x_2,x_3,1-x_3)$ such that $\exists~ x=(x_1,x_2,x_3 )\in X\rbrace$ its state space
\item[•] $S^*=\lbrace 1, \overline{1}, 2, \overline{2},3,\overline{3}\rbrace$ the set of $6 (2\times 3)$ strategies in $A^*$
\item[•] We have $\mathfrak{p}^*=3$ populations
\item[•] Let $x\in X$ such that $F(x)=y=(y_1,y_2,y_3)$ then\\ $F^*(x^*)=y^*=(y_1, 0,y_2,0,y_3,0)\in\mathbb{R}^{6}$
\item[•] Let $N^*=(N,N,N)\in(\mathbb{N}^*)^3$ be the number of agents in each population (which is $N$ for each)
\item[•] \begin{small}
\begin{align*}
\rho^*(\pi^*,x^*) = & \left(\begin{array}{c c c c c c}
\rho^*_{1,1}(\pi^*,x^*) & \rho^*_{1,\overline{1}}(\pi^*,x^*) & 0 & 0 & 0 & 0\\
\rho^*_{\overline{1},1}(\pi^*,x^*) & \rho^*_{\overline{1},\overline{1}}(\pi^*,x ^*) & 0 & 0 & 0 & 0\\
0&0&\rho^*_{2,2}(\pi^*,x^*) & \rho^*_{2,\overline{2}}(\pi^*,x^*) & 0 & 0 \\
0&0&\rho^*_{\overline{2},2}(\pi^*,x^*) & \rho^*_{\overline{2},\overline{2}}(\pi^*, x^*) & 0 & 0\\
0&0&0&0&\rho^*_{3,3}(\pi^*,x^*) & \rho^*_{3,\overline{3}}(\pi^*,x^*)\\
0&0&0&0&\rho^*_{\overline{3},3}(\pi^*,x^*) & \rho^*_{\overline{3},\overline{3}}(\pi^*, x^*)\\
\end{array}\right)\\
\end{align*}
\end{small}
with $\forall i\in S$ :
\begin{itemize}
\item[(i)] $\rho^*_{i,i}(\pi^*,x^*) = \rho_{i,i}(\pi,x)$
\item[(ii)] $\rho^*_{i,\overline{i}}(\pi^*,x^*) =\sum\limits_{\substack{j\in S \\ j\neq i}} \rho_{i,j}(\pi,x)$
\item[(iii)] $\rho^*_{\overline{i},i}(\pi^*,x^*) =\frac{1}{2}\sum\limits_{\substack{j \in S \\ j\neq i}} \rho_{j,i}(\pi,x)$
\item[(iv)] $\rho^*_{\overline{i},\overline{i}}(\pi^*,x^*) =\frac{1}{2}\left(\sum \limits_{\substack{j\in S \\ j\neq i}}\sum\limits_{\substack{k\in S\\ k\neq i \\ k\neq j}} \rho_{j,k }(\pi,x)+\sum\limits_{\substack{j\in S \\ j\neq i}}\rho_{j,j}(\pi,x)\right)$
\end{itemize}
\item[•] Let $R^*$ the new revision rate the same as $\rho^*$
\end{itemize}
~\\
\textbf{Unicity:} We now need to show that from each element previously constructed for $A^*$ we can find the element of $A$ used, in other words, that there is a bijection between each element of $A^*$ and the element of $A$ which allowed its construction.\\
\begin{itemize}
\item[•] To go from $N^*$ to $N$, just take the first coordinate of $N^*$
\item[•] To go from $\mathfrak{p}^*$ to $\mathfrak{p}$ just set $\mathfrak{p}=\frac{\mathfrak{p}^*}{3} $
\item[•] Let $f:\mathbb{R}^{6}\rightarrow\mathbb{R}^3$ such that $f(x_1,...,x_{6})=(x_1,x_3, x_5)$, then $f(X^*)=X$ and $f(F(X^*))=F(X)$
\item[•] By writing $S^*=\lbrace 1,...,6\rbrace$ we can find $S$ by setting $S=\lbrace 1,3,5\rbrace$
\item[•] Recall that $\forall i,j\in S, \rho_{ij}(\pi,x)=\rho_{ji}(\pi,x)$\begin{itemize}
\item[(i)] $\forall i\in S, \rho_{ii}(\pi,x)=\rho_{ii}^*(\pi^*,x^*)$
\item[(ii)] $\forall i,j,k\in S$ with $i\neq j\neq k$, we have:
\begin{align*}
\rho_{ij}(\pi,x)+\rho_{ji}(\pi,x) = & (\rho_{ij}(\pi,x)+\rho_{ik}(\pi,x)) +(\rho_{ji}(\pi,x)+\rho_{jk}(\pi,x))\\ & -(\rho_{ik}(\pi,x)+\rho_{jk}(\ pi,x))\\
= & \rho^*_{i,\overline{i}}(\pi^*,x^*) + \rho^*_{j,\overline{j}}(\pi^*,x^* ) -2 \rho^*_{\overline{k},k}(\pi^*,x^*)
\end{align*}
we then have $\forall i,j,k\in S$ with $i\neq j\neq k$ :\begin{center}
$\rho_{ij}(\pi,x)=\frac{\rho^*_{i,\overline{i}}(\pi^*,x^*) + \rho^*_{j,\overline{j}}(\pi^*,x^*) -2 \rho^*_{\overline{k},k}(\pi^*,x^*)}{2}$
\end{center}
\end{itemize}
\item[•] For $R^*$ we prove uniqueness as for the previous point.
\end{itemize}
\end{proof}
We will call $A^*$ the transformation of $A$.
\begin{prop} If for all $i,j\in S$ we have $\rho_{ij}(\pi,x)=\rho_{ji}(\pi,x)$, and $R_{ij}=R_{ji}$ then for any population game from $N$ agents to $3$ strategies with a full support revision protocol one can define a stationary distribution for the evolutionary process $(X^N_t)$ on $\mathcal{X}^N$.
\end{prop}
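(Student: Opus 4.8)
The plan is to reduce the claim to the two-strategy case via the transformation $A \mapsto A^*$ built in the previous proposition, and then to invoke the reversibility of two-strategy games (Theorem 11.2.3 of \cite{San10}).

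First I would record that $\mathcal{X}^N$ is finite: the configurations of $N$ agents over three strategies form a finite triangular lattice, and by Proposition 0.1 the process $(X^N_t)$ is a continuous-time Markov chain on it, with jump rate $\lambda^N_x = NR$ and the stated transition probabilities. The full-support hypothesis (1) gives $\rho_{ij}(F(x),x) \geq \underline{R} > 0$ for all $i,j \in S$ and all $x$, so from a state $x$ the single-agent switch $x \mapsto x + \tfrac{1}{N}(e_j - e_i)$ has strictly positive probability whenever $x_i > 0$. Since agents can be routed between any two strategies through the third, the chain is irreducible on $\mathcal{X}^N$, and a finite irreducible chain always carries a unique stationary distribution $\mu^N$. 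This settles existence; the content of the statement, in the spirit of the infinite-horizon analysis, is to identify $\mu^N$ with the reversible stationary distribution supplied by the two-strategy structure.

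Next I would transport $(X^N_t)$ to the chain associated with the transformation $A^*$ through the bijection $f$ of the previous proof. The revision matrix $\rho^*$ is block diagonal, so within each population the process only ever moves a single agent between a strategy $i$ and its complement $\overline{i}$; restricted to one population it is thus a birth--death chain, which is automatically reversible. The symmetry hypotheses $\rho_{ij} = \rho_{ji}$ and $R_{ij} = R_{ji}$, together with the definitions (i)--(iv) of $\rho^*$, are exactly what Theorem 11.2.3 of \cite{San10} requires in order to produce an explicit reversible stationary distribution for the transformed process. Pulling this distribution back along $f$ then yields the $\mu^N$ asserted in the statement and exhibits its shape.

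The main obstacle I anticipate is reversibility of the coupled system rather than existence itself. The three populations of $A^*$ are \emph{not} independent: the payoff vector $F^*(x^*)$ depends on the whole state and the image $X^*$ is only the two-dimensional slice $\{(x_1,1-x_1,x_2,1-x_2,x_3,1-x_3) : x \in X\}$, so the block birth--death chains interact, and a product of reversible chains need not stay reversible once the rates are state-dependent. To close this gap I would verify Kolmogorov's cycle criterion directly on the triangular lattice: it suffices to check that, around each elementary cycle obtained by cyclically moving one agent $1 \to 2 \to 3 \to 1$, the product of the forward rates equals the product of the reverse rates. This is precisely where $\rho_{ij} = \rho_{ji}$ enters, and where the inversion formula $\rho_{ij} = \tfrac{1}{2}\big(\rho^*_{i,\overline{i}} + \rho^*_{j,\overline{j}} - 2\rho^*_{\overline{k},k}\big)$ of the previous proof is used to show that the cycle products cancel.
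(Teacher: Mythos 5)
Your overall route coincides with the paper's: transform $A$ into the three-population, two-strategy game $A^*$ of the preceding proposition and invoke Theorem 11.2.3 of \cite{San10}. Your opening paragraph (finite state space, irreducibility from the full-support condition, hence existence and uniqueness of $\mu^N$) is correct, and it is already more than the paper's proof contains. More importantly, you put your finger on the real weak point: the paper's proof applies Theorem 11.2.3 population by population, ignoring that the three birth--death blocks of $A^*$ are coupled --- a single agent's switch in $A$ moves \emph{two} populations of $A^*$ at once, the admissible states form only the slice $\lbrace (x_1,1-x_1,x_2,1-x_2,x_3,1-x_3) : x\in X\rbrace$, and the rates depend on the joint state. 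The paper's own closing remark (``maybe have to make a mixture model to return to dimension 1'') is an implicit admission of this gap, so on this diagnosis you are ahead of the text.

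However, your proposed repair does not close the gap. Write $\rho_{ij}(x)$ for $\rho_{ij}(F(x),x)$ and take the elementary cycle $x \to x' \to x'' \to x$ with $x' = x + \tfrac{1}{N}(e_2-e_1)$ and $x'' = x + \tfrac{1}{N}(e_3-e_1)$. The occupancy factors $x_1$, $x_2+\tfrac{1}{N}$, $x_3+\tfrac{1}{N}$ appear in both the forward and the backward products, so after using $\rho_{ij}=\rho_{ji}$ Kolmogorov's criterion reduces to
\begin{equation*}
\rho_{12}(x)\,\rho_{23}(x')\,\rho_{31}(x'') \;=\; \rho_{12}(x')\,\rho_{23}(x'')\,\rho_{31}(x).
\end{equation*}
The symmetry hypothesis is pointwise in the state, but here each rate is evaluated at a \emph{different} vertex of the cycle, so nothing cancels. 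Concretely, the symmetric full-support protocol $\rho_{12}=\rho_{21}=1+x_1$, $\rho_{13}=\rho_{31}=\rho_{23}=\rho_{32}=1$ gives left-hand side $1+x_1$ and right-hand side $1+x_1-\tfrac{1}{N}$, so the criterion fails and the chain is genuinely non-reversible; no pullback of a birth--death product formula can then describe $\mu^N$. The cancellation you assert would require the rates to be state-independent, or to satisfy an additional cocycle/potential condition around each elementary triangle --- symmetry alone is not enough. What survives of your argument is the (easy) existence statement; the identification of $\mu^N$ via Theorem 11.2.3 remains unjustified, in your write-up and in the paper's proof alike.
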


\begin{proof}

Let $A$ be a population game of $N$ agents with $3$ strategies with a satisfactory full-support revision protocol for all $i,j\in S$ that $\rho_{ij}(\pi,x)= \rho_{ji}(\pi,x)$. Then by applying Proposition 3.1, we can define the game $A^*$ which is the transformation of $A$. We can then apply theorem 11.2.3 of\cite{San10} which gives us the stationary distribution of the set $A^*$ which is for its population $i\in\llbracket 1,3\rrbracket$:  \begin{center}
$\frac{\mu^{N_i}_{\mathcal{X}_i}}{\mu^{N_i}_0}=\prod\limits^{N\mathcal{X}_i}_{j=1} \frac{N-j-1}{j}\cdot\frac{\rho^*_{i,\overline{i}}(F^{*i}(\frac{j-1}{N}),{ \frac{j-1}{N}})}{\rho^*_{\overline{i},i}(F^{*i}(\frac{j}{N}),{\frac{ j}{N}})}$ with $\mathcal{X}_i\in\lbrace 0,\frac{1}{N},\frac{2}{N},...,1\rbrace$ and $\sum\limits_{i=1}^3\mathcal{X}_i=1$
\end{center}
with $\mu^{N_i}_0$ determined by the requirement that $\sum\limits_{\mathcal{X}_i\in\mathcal{X}^{N_i}}\mu^{N_i}_{\mathcal {X}_i} = 1$.\\

\textbf{Remark :} Maybe have to make a mixture model to return to dimension 1

\end{proof}
\section{The case $\mathfrak{p}$ populations}
In this part we will show that proposition 1.2 is true for several populations, for this we will show a more general result by showing that we can find an infinite horizon that the number of strategy of the game of each population is 2 or 3 .\\

\begin{prop}
Let $A$ be a game with $\mathfrak{p}$ populations such that the population $p$ plays a game with $n^p$ strategies ($n^p=2$ or $3$), that its revision protocol is with full support and that for all $i,j\in S$ we have $\rho^p_{ij}(\pi,x)=\rho^p_{ji}(\pi,x)$ and $R_{ij}=R_{ji}$ then we can define a stationary distribution for the evolutionary process $(X^N_t)$ on $\mathcal{X}^N$.
\end{prop}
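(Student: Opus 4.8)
The plan is to reduce the general $\mathfrak{p}$-population game $A$ to a game in which \emph{every} population has exactly two strategies, and then to invoke the two-strategy stationary-distribution result (theorem 11.2.3 of \cite{San10}) as was done for a single population in Proposition 1.2. First I would split the set of populations as $\mathcal{P}=\mathcal{P}_2\cup\mathcal{P}_3$, where $\mathcal{P}_2=\{p:n^p=2\}$ and $\mathcal{P}_3=\{p:n^p=3\}$.

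For each $p\in\mathcal{P}_3$ I would apply the transformation of Proposition 1.1, replacing population $p$ by three two-strategy populations with payoffs $F^*$, rates $R^*$, sizes $N^*$ and the block-diagonal protocol $\rho^*$ given by formulas (i)--(iv); the populations in $\mathcal{P}_2$ are left untouched. This produces a transformed game $A^*$ with $\mathfrak{p}^*=|\mathcal{P}_2|+3|\mathcal{P}_3|$ populations, each having two strategies. I would then check that $A^*$ inherits the hypotheses: full support (condition (1)) and the symmetry $\rho^*_{ij}=\rho^*_{ji}$, $R^*_{ij}=R^*_{ji}$ follow from the construction, since the new rates are symmetric sums and half-sums of the original symmetric rates. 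Because the transformation is a bijection (the uniqueness part of Proposition 1.1), any stationary distribution exhibited for $A^*$ pulls back to one for $A$ through the coordinate map $f(x_1,\dots,x_6)=(x_1,x_3,x_5)$ applied population-wise.

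On each two-strategy population taken in isolation the process is a one-dimensional birth--death chain, whose stationary law is the explicit product formula of theorem 11.2.3, exactly as in Proposition 1.2. The substantive new step is to assemble these marginals into a stationary distribution for the \emph{joint} process on $\mathcal{X}^N=\prod_p\mathcal{X}^{N^p}$. Since $\rho^*$ is block-diagonal, each elementary transition moves a single population by one agent, so the global chain is a multidimensional birth--death process on a lattice; such a chain is reversible -- and therefore admits the product of the per-population birth--death laws as stationary distribution -- precisely when Kolmogorov's cycle criterion holds around every unit square of the lattice.

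I expect \textbf{this reversibility condition to be the main obstacle}, because the switching rate inside population $p$ depends on the full state $x$ through the payoff $F^p(x)$, which couples populations that the protocol leaves formally independent. Concretely I would write Kolmogorov's criterion on a square mixing two populations $p\neq q$ -- comparing the product of the four transition rates traversed one way against the other -- and check whether the symmetry hypothesis forces the two products to agree; if so, the joint stationary distribution factorizes and the proof closes. If the criterion fails in general, the fallback is to solve the global balance equations directly or to construct a potential for which the generator is reversible, and, as already flagged in the remark following Proposition 1.2, to handle the marginalization (a mixture model) needed to express the result back in the coordinates of the original game $A$.
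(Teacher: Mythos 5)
Your proposal follows the same route as the paper's own proof: transform each 3-strategy population via Proposition 1.1 into three two-strategy populations, leave the 2-strategy populations alone, observe that the resulting game $A^*$ has $\sum_p n^p$ populations each with two strategies, and invoke Theorem 11.2.3 of \cite{San10} to write a birth--death product formula for each population of $A^*$. One small correction to your inheritance check: the transformed protocol is \emph{not} symmetric, since by (ii) and (iii) of Proposition 1.1 one gets $\rho^*_{i,\overline{i}}=\sum_{j\neq i}\rho_{ij}$ but $\rho^*_{\overline{i},i}=\frac{1}{2}\sum_{j\neq i}\rho_{ji}$, so under the hypothesis $\rho_{ij}=\rho_{ji}$ these differ by a factor of $2$. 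This does not hurt the argument, because reversibility of a single-population two-strategy chain (a one-dimensional birth--death chain) needs no symmetry of the protocol; the symmetry hypothesis is only used in Proposition 1.1 to invert the transformation.

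The obstacle you single out, however, is genuine, and it is exactly the step the paper's proof passes over in silence. Theorem 11.2.3 is a single-population, two-strategy result; the paper applies it population by population and writes, for each population $i$ of $A^*$, the formula with $\rho^*_{i,\overline{i}}\bigl(F^{*i}(\frac{j-1}{N^p}),\frac{j-1}{N^p}\bigr)$, i.e.\ with the payoff $F^{*i}$ evaluated at the scalar state of population $i$ alone. That expression is only well defined, and the product of the marginal birth--death laws is only stationary for the joint process on $\prod_p\mathcal{X}^{N^p}$, if each population's payoffs (hence its transition rates) depend on its own state only --- precisely the decoupling you point out is violated whenever $F^p(x)$ depends on the full social state, which is the whole point of a multi-population game. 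In general the joint chain is a multidimensional process with coupled rates, Kolmogorov's cycle criterion can fail, and no product formula holds. So your demand to verify the cycle condition (or to fall back on global balance / a potential structure) is not a detour from the paper's argument; it is the verification the paper's proof is missing. In that sense your proposal is more careful than the text, though, like the text, it does not actually close that step.
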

\begin{proof}
Let $p\in\llbracket 1,\mathfrak{p}\rrbracket~A^p$ be the game played by the population $p$. Applying Proposition 1.2, let $p\in\llbracket 1,\mathfrak{p}\rrbracket~A^{*p}$ be the transformation into a 2-strategy game of the game played by the population $p$ ($A^{ *p}=A$ if $n^p=2$). By setting $A^*$ the set of these games, we then end up with a game $A^*$ which is a 2-strategy game with $\sum\limits_{k=1}^\mathfrak{p} n^k=n$ populations, and $A^*$ represents the same game as $A$. We can then apply Theorem 11.2.3 of \cite{San10} which gives us the stationary distribution of the set $A^*$ which is for its population $i\in\llbracket 1,n\rrbracket$ such that $\exists p\in\llbracket 1,\mathfrak{p}\rrbracket$ such that $i\in\llbracket \sum\limits_{k=0}^{p-1}n^k+1,\sum\limits_{k =0}^{p}n^k\rrbracket$ (with $n^0=0)$: \begin{center}
$\frac{\mu^{N_i}_{\mathcal{X}_i}}{\mu^{N_i}_0}=\prod\limits^{N^p\mathcal{X}_i}_{j= 1}\frac{N^p-j-1}{j}\cdot\frac{\rho^*_{i,\overline{i}}(F^{*i}(\frac{j-1}{N ^p}),{\frac{j-1}{N^p}})}{\rho^*_{\overline{i},i}(F^{*i}(\frac{j}{ N^p}),{\frac{j}{N^p}})}$ with $\mathcal{X}_i\in\lbrace 0,\frac{1}{N^p},\frac{2 }{N^p},...,1\rbrace$ and $\sum\limits_{i=n^{p-1}+1}^{n^p}\mathcal{X}_i=1$ and $\sum\limits_{j=1}^{n}\mathcal{X}_j=\mathfrak{p}$
\end{center}
with $\mu^{N_i}_0$ determined by the requirement that $\sum\limits_{\mathcal{X}_i\in\mathcal{X}^{N_i}}\mu^{N_i}_{\mathcal {X}_i} = $1
\end{proof}
\section{Transformation of an n-strategy symetric game into a 2-strategy game}
To go from a game with $n$ strategies to a game with $2$ strategies, we will show that for a game with $n$ strategies there exists a unique game with $n-1$ strategies which represents the same game, then by repeating this we come across a game with $2$ strategies.\\

\begin{definition} $A$ is a symetric game if for all $i,j\in S$ we have $\rho_{ij}(\pi,x)=\rho_{ji}(\pi,x)$ and $R_{ij}=R_{ji}$.
\end{definition}

Let $A$ be a game of a population with $n$ strategies that we will call the initial game, in connection with this game we then pose:
\begin{itemize}
\item[•] $X$ the state space of $A$
\item[•] $S=\lbrace 1,2,...,n\rbrace$ the set of $n$ strategies in $A$
\item[•] We have $\mathfrak{p}=1$
\item[•] Let $F$ be its gain function
\item[•] We set $\rho$ its full support revision protocol
\item[•] $R$ its revision rate
\item[•] $N$ the number of agents in the population
\end{itemize}
\begin{prop} For any population game with $n$ strategies ($n>3$) there exists a unique population game with $n$ population each playing a game with $n-1$ strategies describing the same game .
\end{prop}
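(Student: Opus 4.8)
The plan is to follow exactly the two-part template used for the three-strategy reduction: first prove \textbf{Existence} by an explicit construction of the reduced game, then prove \textbf{Uniqueness} by exhibiting the inverse maps that recover every datum of $A$. Throughout I would keep the standing hypothesis of Definition 3.1, namely $\rho_{ij}(\pi,x)=\rho_{ji}(\pi,x)$ and $R_{ij}=R_{ji}$, since this symmetry is what ultimately makes the inversion well posed.

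For \textbf{Existence}, I would build a game $A^{*}$ with $\mathfrak p^{*}=n$ populations, indexed by $i\in S$, each playing an $(n-1)$-strategy game obtained from $A$ by an aggregation that removes exactly one degree of freedom from the strategy set (in contrast with the $n=3$ collapse-to-two, which removes more). I would specify in turn: the state space $X^{*}$ as the image of $X$ under the associated coordinate map; the strategy sets of each population; the payoff function $F^{*}$, extended and aggregated from $F$ as in the three-strategy case; the agent counts $N^{*}=(N,\dots,N)$ and the revision rate $R^{*}$; and, crucially, the revision protocol $\rho^{*}$ of each population by generalising the aggregation rules (i)--(iv): diagonal rates are preserved, the out-rate of a strategy is the sum of the original rates leaving it, and the in-rates and the internal block are the corresponding sums carrying the combinatorial normalisation factor (the role played by $\tfrac12$ when $n=3$). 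Two things then must be checked: that $\rho^{*}$ is again a full-support protocol, i.e. that it inherits a uniform lower bound $\underline R$ from condition (1); and that $A^{*}$ ``describes the same game'', which I would make precise as the statement that the mean dynamic (equivalently, the generator of $(X^{N}_{t})$) induced by $A^{*}$ projects onto that of $A$ on the shared coordinates. Both are direct, if tedious, computations.

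For \textbf{Uniqueness}, I would give explicit inverse maps as in the three-strategy proof: $N$ is the first coordinate of $N^{*}$, the original single population is recovered from $\mathfrak p^{*}=n$, and $X$, $S$, $F$ are recovered by the appropriate coordinate-selection map $f$. The substance is recovering $\rho$ from $\rho^{*}$. Here the simple closed form $\rho_{ij}=\tfrac12\bigl(\rho^{*}_{i,\overline i}+\rho^{*}_{j,\overline j}-2\rho^{*}_{\overline k,k}\bigr)$ of the $n=3$ case is no longer available, because for $n>3$ the indices $i,j$ no longer single out a unique ``third index'' $k$. Instead I would assemble the aggregation rules into a linear system expressing the aggregated starred quantities in terms of the $\binom n2$ symmetric off-diagonal rates $\rho_{ij}$ (with $i<j$) together with the diagonal rates, and solve it.

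The main obstacle is precisely the invertibility of that aggregation map, i.e. showing the linear system has a unique solution (full rank), so that $\rho$ is uniquely determined. This is where symmetry does the essential work: it collapses the $n(n-1)$ directed rates to $\binom n2$ unknowns and matches them against the aggregated data, and one must verify that the resulting coefficient matrix --- together with the overlap relations forcing the $n$ populations to agree where their strategy sets intersect --- is nonsingular. Once this single step is established, it is iterated, reducing $n\mapsto n-1$ until one reaches a $3$- or $2$-strategy game, at which point the earlier propositions and Theorem 11.2.3 of \cite{San10} furnish the stationary distribution and hence the infinite horizon.
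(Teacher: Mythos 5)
There is a genuine gap, and it sits exactly where you yourself put it: the uniqueness step. You correctly observe that the $n=3$ closed-form inversion does not generalise, you reduce uniqueness to the nonsingularity of an aggregation map, and then you stop --- ``one must verify that the resulting coefficient matrix is nonsingular'' is the whole of your argument. That verification \emph{is} the proposition; without it you have at best an existence statement. Worse, the verification cannot even be attempted from what you wrote, because your existence construction never specifies which strategies population $i$ aggregates, and the coefficient matrix, hence its rank, depends entirely on that choice. (Your guess that the factor $\tfrac12$ from the $n=3$ case reappears as a ``combinatorial normalisation'' also does not match the paper: in its $n$-strategy construction the rates to and from the aggregate strategy are plain sums, with no normalising factor.)

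The paper avoids any rank argument by choosing the aggregation so that inversion is explicit and essentially trivial. Population $i$ keeps the $n-2$ strategies $(i+0)_{\equiv n},\dots,(i+n-3)_{\equiv n}$ untouched and merges only the remaining two, $(i+n-2)_{\equiv n}$ and $(i+n-1)_{\equiv n}$, into a single aggregate strategy $a_i$; inside the block $A_i^{n-1}$ all rates among kept strategies are copied verbatim from $\rho$, and only the last row and column of the block involve sums. Uniqueness is then read off coordinatewise: $\rho_{ii}$, and every $\rho_{ij}$ with $j$ among the kept strategies of population $i$, are literally entries of $A_i^{n-1}$; the rate $\rho_{i,(i+n-1)_{\equiv n}}$ is literally an entry of the block of population $(i+n-1)_{\equiv n}$, which keeps both strategies concerned; and the single remaining rate $\rho_{i,(i+n-2)_{\equiv n}}$ follows by one subtraction, namely $\rho^*_{i,a_i}$ minus the rate just recovered, since $\rho^*_{i,a_i}=\rho_{i,(i+n-2)_{\equiv n}}+\rho_{i,(i+n-1)_{\equiv n}}$. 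So the missing idea is not a full-rank lemma but a design decision: merge only two strategies per population so that no linear system ever has to be solved. As written, your proof cannot be completed, because the one step you defer is the entire content of uniqueness and its feasibility hinges on construction details you left open.
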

\begin{proof} To demonstrate this we must first build this new game, ie show its existence, and then show its uniqueness. \\

\textbf{Existence:} Let $A^*_{n-1}$ be a set of $n$ populations each having access to $n-1$ strategies defined as follows: \begin{itemize}
\item[•] $X^*_{n-1}=\lbrace x^*_{n-1}=(x_1,x_2,...,x_{n-2},1-x_{n- 1}-x_n,x_2,x_3,...,x_{n-1},1-x_1-x_n,...,x_n,x_1,x_2,...,x_{n-3},1-x_ {n-1}-x_{n-2})$ such that $\exists~ x=(x_1,...,x_n)\in X\rbrace$ its state space
\item[•] $S^*_{n-1}=\lbrace 1, 2,...,n-2,a_1,2,3,...,n-1,a_2,..., n,1,2,...,n-3,a_{n}\rbrace$ the set of $n-1\times n$ strategies in $A^*_{n-1}$
\item[•] We have $\mathfrak{p}^*_{n-1}=n$ populations
\item[•] Let $x\in X$ such that $F(x)=y=(y_1,...,y_n)$ then\\ $F^*_{n-1}(x^*_ {n-1})=y^*_{n-1}=(y_1,y_2,...,y_{n-2},0,y_2,y_3,...,y_{n-1}, 0,...,y_n,y_1,y_2,...,y_{n-3},0)$
\item[•] Let $N^*_{n-1}=(N,...,N)\in(\mathbb{N}^*)^n$ be the number of agents in each population ( which is $N$ for each)
\item[•] \begin{align*}
\rho^*_{n-1}(\pi^*_{n-1},x^*_{n-1}) = & \left(\begin{array}{c c c c c c}
A_1^{n-1} &0_{\mathcal{M}_{n-1}} & \cdots & 0_{\mathcal{M}_{n-1}}\\
0_{\mathcal{M}_{n-1}} & \ddots & \ddots & \vdots \\
\vdots&\ddots&\ddots & 0_{\mathcal{M}_{n-1}} \\
0_{\mathcal{M}_{n-1}}&\cdots&0_{\mathcal{M}_{n-1}}&A_n^{n-1}\\
\end{array}\right)\in\mathcal{M}_{n(n-1)} \\
\end{align*}
with $\forall i\in S$ :
\begin{align*} 
A_i^{n-1} = & \left(\begin{array}{c c c c c c}
\rho_{(i+0)_{\equiv n},(i+0)_{\equiv n}} &\rho_{(i+0)_{\equiv n},(i+1)_{\equiv n}} & \cdots & \rho_{(i+0)_{\equiv n},(i+n-3)_{\equiv n}}&\rho^*_{{n-1}_{(i+0)_{\equiv n},a_i}}\\
\rho_{(i+1)_{\equiv n},(i+0)_{\equiv n}} &\rho_{(i+1)_{\equiv n},(i+1)_{\equiv n}} & \cdots & \rho_{(i+1)_{\equiv n},(i+n-3)_{\equiv n}}&\rho^*_{{n-1}_{(i+1)_{\equiv n},a_i}}\\
\vdots & \vdots & \ddots & \vdots& \vdots \\
\rho_{(i+n-3)_{\equiv n}(,i+0)_{\equiv n}} &\rho_{(i+n-3)_{\equiv n},(i+1)_{\equiv n}} & \cdots & \rho_{(i+n-3)_{\equiv n},(i+n-3)_{\equiv n}}&\rho^*_{{n-1}_{(i+n-3)_{\equiv n},a_i}}\\
\rho^*_{{n-1}_{a_i,(i+0)_{\equiv n}}} &\rho^*_{{n-1}_{a_i,(i+1)_{\equiv n}}} & \cdots & \rho^*_{{n-1}_{a_i,(i+n-3)_{\equiv n}}}&\rho^*_{{n-1}_{a_i,a_i}}\\
\end{array}\right) \\
& \in\mathcal{M}_{n-1} \\
\end{align*}
with $\forall j\in S$: \begin{itemize}
\item[•] $\rho^*_{{n-1}_{j,a_i}}= \rho_{j,(i+n-2)_{\equiv n}}+\rho_{j, (i+n-1)_{\equiv n}}$
\item[•] $\rho^*_{{n-1}_{a_i,j}}=\rho_{(i+n-2)_{\equiv n},j}+\rho_{(i +n-1)_{\equiv n},j}$
\item[•] $\rho^*_{{n-1}_{a_i,a_i}}=\rho_{(i+n-2)_{\equiv n},(i+n-2)_{\equiv n}}+\rho_{(i+n-2)_{\equiv n},(i+n-1)_{\equiv n}}\rho_{(i+n-1)_{\equiv n},(i+n-2)_{\equiv n}}+\rho_{(i+n-1)_{\equiv n},(i+n-1)_{\equiv n}}$
\end{itemize}
where $\forall a\in\mathbb{N}~a_{\equiv n}$ is $a$ modulo $n$.
\item[•] We set $R^*_{n-1}$ the new revision rate in the same way as $\rho^*_{n-1}$
\end{itemize}
~\\
\textbf{Unicity:} We now need to show that from each element previously constructed for $A^*$ we can find the element of $A$ used, in other words, that there is a bijection between each element of $A^*$ and the element of $A$ which allowed its construction.\\
\begin{itemize}
\item[•] To go from $N^*_{n-1}$ to $N$, just take the first coordinate of $N^*_{n-1}$
\item[•] To go from $\mathfrak{p}^*_{n-1}$ to $\mathfrak{p}$ just set $\mathfrak{p}=\frac{\mathfrak{p} ^*_{n-1}}{n}$
\item[•] Let's set $f:\mathbb{R}^{n^2-n}\rightarrow\mathbb{R}^n$ such that $f(x_1,x_2...,x_{n^2-n})=(x_1,... ,x_{pn-p-1},...,x_{n^2-2n-1})$ with $p\in\llbracket 1, n-1\rrbracket$, then $f(X^*) =X$ and $f(F(X^*))=F(X)$
\item[•] By writing $S^*_{n_1}=\lbrace 1,...,n^2-n\rbrace$ we can find $S$ by setting $S=\lbrace 1,... ,pn-p-1,...,n^2-2n-1\rbrace$ with $p\in\llbracket 1, n-1\rrbracket$
\item[•] \begin{itemize}
\item[(i)] $\forall i\in S, \rho_{ii}(\pi,x)=\rho^*_{{n-1}_{(i-1)n-(i- 1)+1,(i-1)n-(i-1)+1}}(\pi^*_{n-1},x^*_{n-1})$ (this is the first coordinate of $A^{n-1}_i(\pi^*_{n-1},x^*_{n-1})$)\\
\item[(ii)] $\forall i\in S$ and $j\in\llbracket (i+1)_{\equiv n}, (i+n-3)_{\equiv n}\rrbracket$ such that $(i+a)_{\equiv n}=j$, $~\rho_{ij}(\pi,x)=\rho^*_{{n-1}_{(i-1) n-(i-1)+1,(i-1)n-(i-1)+1+a}}(\pi^*_{n-1},x^*_{n-1}) $
\item[(iii)] $\forall i\in S$ and $j=(i+n-1)_{\equiv n}$, $\rho_{ij}(\pi,x)=\rho^ *_{{n-1}_{(j-1)n-(j-1)+2,(j-1)n-(j-1)+1}}(\pi^*_{n- 1},x^*_{n-1})$
\item[(iv)] $\forall i\in S$ and $j=(i+n-2)_{\equiv n}$, \begin{center}
$\rho_{ij}(\pi,x)=\rho^*_{{n-1}_{(i-1)n-(i-1)+1,(i-1)n-(i -1)+n-1}}(\pi^*_{n-1},x^*_{n-1})$\\
\textcolor{white}{$\rho_{ij}(\pi,x)=$} $- \rho^*_{{n-1}_{(j-1)n-(j-1)+2 ,(j-1)n-(j-1)+1}}(\pi^*_{n-1},x^*_{n-1})$
\end{center}
\end{itemize}
\item[•] For $R^*_{n-1}$ we prove uniqueness as for the previous point.
\end{itemize}
\end{proof}
Now that we know how to go from a game with $n$ strategies to a game with $n-1$ strategies, we can explain how to go from a game with $n$ strategy to a game with $m$ strategy with $m \leq n$.
\begin{prop} For any population game with $n$ strategies ($n>3$) there exists a population game with $n$ population each playing a game with $m$ strategies ($m\leq n$ ) describing the same game.
\end{prop}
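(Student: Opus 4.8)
The plan is to prove this by induction on the number of strategies to be eliminated, that is, on $k = n-m$, using Proposition 3.1 as the single elementary reduction step. The base cases are immediate: if $k=0$ then $m=n$ and the game $A$ itself is the desired game, while if $k=1$ the statement is exactly Proposition 3.1. For the inductive step I would assume the result for every one-population game whose number of strategies exceeds $m$ by strictly less than $k$, and then treat a game $A$ with $n$ strategies where $n-m=k$. A preliminary observation is that the relation \emph{describes the same game} composes: if $B$ describes the same game as $A$ and $C$ describes the same game as $B$, then $C$ describes the same game as $A$; this transitivity is what lets the successive reductions chain together.

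First I would apply Proposition 3.1 to $A$, obtaining a game $A^*_{n-1}$ with $n$ populations, each playing a game with $n-1$ strategies and describing the same game as $A$. Since $(n-1)-m = k-1 < k$, the induction hypothesis applies to each of the $n$ one-population $(n-1)$-strategy subgames $A^{*p}_{n-1}$, producing for each an equivalent game with $m$ strategies per population. Exactly as in the proof of Proposition 2.1, I would then assemble these population-by-population transformations into a single game $A^*$: even though the payoffs couple the populations, the reduction acts only on the protocol and strategy structure, and the protocol $\rho^*_{n-1}$ is block-diagonal across populations, so transforming each population independently and taking the union of the resulting populations preserves the relation. Composing the two reductions $A \to A^*_{n-1} \to A^*$ then exhibits $A^*$ as a game with $m$ strategies per population describing the same game as $A$, closing the induction.

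The main obstacle is verifying that the hypotheses needed to re-apply the reduction genuinely propagate down the iteration. Proposition 3.1 is set up for symmetric, full-support one-population games, yet its existence construction sends $\rho^*_{i,\overline{i}} = \sum_{j\neq i}\rho_{ij}$ and $\rho^*_{\overline{i},i} = \tfrac{1}{2}\sum_{j\neq i}\rho_{ji}$, which are not equal in general, so symmetry is not literally inherited by the subgames $A^{*p}_{n-1}$. I would therefore need to check that the \emph{existence} half of Proposition 3.1 — unlike its uniqueness half, which is the only place $\rho_{ij}=\rho_{ji}$ is actually used, to invert the defining formulas — does not require symmetry, so that the construction can be reapplied to the possibly non-symmetric subgames; dropping the uniqueness claim from Proposition 3.2 is precisely what makes this viable. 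The remaining difficulties are bookkeeping rather than conceptual: confirming that full support (condition (1)) survives each reduction, carrying the cyclic index relabelings $(i+a)_{\equiv n}$ correctly through every level, and noting that the population count multiplies at each step, so that reaching $m$ strategies yields $n(n-1)\cdots(m+1)$ populations rather than literally $n$. Finally, if $m=2$ is permitted, the last descent from three to two strategies should be routed through the one-population reduction of Section 1 instead of Proposition 3.1, since the latter requires strictly more than three strategies.
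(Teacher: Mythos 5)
Your overall strategy---iterate the $n\rightarrow n-1$ reduction of Proposition 3.1, applying it population by population across the block-diagonal protocol, and chain the equivalences by transitivity---is the same as the paper's, and two of your side remarks are genuinely more careful than the paper: the observation that only the \emph{existence} half of Proposition 3.1 (which never uses $\rho_{ij}=\rho_{ji}$) needs to survive the iteration, a point the paper glosses over entirely, and the observation that a final descent from $3$ to $2$ strategies must be routed through the one-population reduction of Section 1, which is exactly what the paper does later in the proof of Proposition 3.3.

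However, there is one genuine gap, and it is precisely the point you wave off as bookkeeping. The statement asserts the existence of a game with \emph{$n$ populations}, each playing an $m$-strategy game, whereas your induction, as you yourself concede, terminates with $n(n-1)\cdots(m+1)$ populations; that is a different (weaker) statement, and no relabeling repairs it. The paper closes this hole with an extra step your proof lacks: after applying the reduction population-wise to $A^*_{n-1}$, which produces a game $B^*_{n-1}$ with $n\times(n-1)$ populations of $n-2$ strategies, it does \emph{not} keep all of $B^*_{n-1}$, but passes to a \emph{restriction} $A^*_{n-2}$, retaining only $n$ of the $n(n-1)$ diagonal blocks of the protocol (those written $B^{n-1}_{in-i-1}$ in the paper's notation, one per population of the previous stage), justified by the closing remark that the discarded populations ``mainly consist of removing coordinates that appear several times,'' i.e.\ they duplicate information already carried by the retained ones. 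With this pruning the population count is held at $n$ at every stage, so the iteration genuinely lands on a game with $n$ populations of $m$ strategies. To repair your argument you would need to build this pruning into the inductive step---so that the hypothesis you propagate is ``there exists an equivalent game with \emph{exactly $n$} populations of $m$ strategies''---rather than letting the populations multiply and noting the discrepancy at the end.
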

\begin{proof}
Let $A$ be the initial game with $n$ strategies, let $A^*_{n-1}$ be its transformation into a game with $n-1$ strategies given by Proposition 1.1. Applying proposition 1.1 again to the game $A^*_{n-1}$ we find a game $B^*_{n-1}$ defined as follows:
\begin{itemize}
\item[•] $X^*_{B^*_{n-1}}=\lbrace x^*_{B^*_{n-1}}=(x_1,x_2,...,x_ {n-3},1-x_{n-2}-x_{n-1}-x_n,...,x_n,x_1,x_2,...,x_{n-4},1-x_{n -1}-x_{n-2}-x_{n-3})\in\mathbb{R}^{n\times (n-1)\times (n-2)}$ such that $\exists~ x=(x_1,...,x_n)\in X\rbrace$ its state space
\item[•] $S^*_{B^*_{n-1}}=\lbrace 1, 2,...,n-3,a_1,2,3,...,n-2, a_2,...,n,1,2,...,n-4,a_{n}\rbrace$ the set of $(n-2)\times (n-1)\times n$ strategies in $A^*_{n-1}$
\item[•] We have $\mathfrak{p}^*_{B^*_{n-1}}=n\times (n-1)$ populations
\item[•] Let $x\in X$ such that $F(x)=y=(y_1,...,y_n)$ then\\ $F^*_{B^*_{n-1} }(x^*_{B^*_{n-1}})=y^*_{B^*_{n-1}}=(y_1,y_2,...,y_{n-3} ,0,y_2,y_3,...,y_{n-2},0,...,y_n,y_1,y_2,...,y_{n-4},0)$
\item[•] We set $N^*_{B^*_{n-1}}=(N,...,N)\in(\mathbb{N}^*)^{n\times n -1}$ the number of agents in each population (which is $N$ for each)
\item[•] \begin{align*}
\rho^*_{B^*_{n-1}}(\pi^*_{n-1},x^*_{n-1}) = & \left(\begin{array}{c c c c c c }
B_1^{n-1} &0_{\mathcal{M}_{n-2}} & \cdots & 0_{\mathcal{M}_{n-2}}\\
0_{\mathcal{M}_{n-2}} & \ddots & \ddots & \vdots \\
\vdots&\ddots&\ddots & 0_{\mathcal{M}_{n-2}} \\
0_{\mathcal{M}_{n-2}}&\cdots&0_{\mathcal{M}_{n-2}}&B_{n\times (n-1)}^{n-1}\\
\end{array}\right)\in\mathcal{M}_{n(n-1)(n-2)} \\
\end{align*}
with $\forall i\in S$ :
\begin{align*} 
B_i^{n-1} = & \left(\begin{array}{c c c c c c}
\rho_{(i+0)_{\equiv n},(i+0)_{\equiv n}} &\rho_{(i+0)_{\equiv n},(i+1)_{\equiv n}} & \cdots & \rho_{(i+0)_{\equiv n},(i+n-4)_{\equiv n}}&\rho^*_{{B^*_{n-1}}_{(i+0)_{\equiv n},a_i}}\\
\rho_{(i+1)_{\equiv n},(i+0)_{\equiv n}} &\rho_{(i+1)_{\equiv n},(i+1)_{\equiv n}} & \cdots & \rho_{(i+1)_{\equiv n},(i+n-4)_{\equiv n}}&\rho^*_{{B^*_{n-1}}_{(i+1)_{\equiv n},a_i}}\\
\vdots & \vdots & \ddots & \vdots& \vdots \\
\rho_{(i+n-4)_{\equiv n}(,i+0)_{\equiv n}} &\rho_{(i+n-4)_{\equiv n},(i+1)_{\equiv n}} & \cdots & \rho_{(i+n-4)_{\equiv n},(i+n-4)_{\equiv n}}&\rho^*_{{B^*_{n-1}}_{(i+n-4)_{\equiv n},a_i}}\\
\rho^*_{{B^*_{n-1}}_{a_i,(i+0)_{\equiv n}}} &\rho^*_{{B^*_{n-1}}_{a_i,(i+1)_{\equiv n}}} & \cdots & \rho^*_{{B^*_{n-1}}_{a_i,(i+n-4)_{\equiv n}}}&\rho^*_{{B^*_{n-1}}_{a_i,a_i}}\\
\end{array}\right) \\
& \in\mathcal{M}_{n-1} \\
\end{align*}
\item[•] Let $R^*_{B^*_{n-1}}=(R,...,R)$ be the revision rate in each population (which is $R$ for each)
\end{itemize}
Let us then set $A^*_{n-2}$ a restriction of $B^*_{n-1}$ such that its revision protocol is: \begin{align*}
~ & \left(\begin{array}{c c c c c c c c}
B_1^{n-1} &0_{\mathcal{M}_{n-2}} & \cdots & 0_{\mathcal{M}_{n-2}}& \cdots & \cdots & 0_{\mathcal {M}_{n-2}}\\
0_{\mathcal{M}_{n-2}} & \ddots & \ddots & \vdots& \ddots & \ddots & \vdots \\
\vdots&\ddots&\ddots & 0_{\mathcal{M}_{n-2}} & \cdots & \cdots & 0_{\mathcal{M}_{n-2}}\\
0_{\mathcal{M}_{n-2}}&\cdots &0_{\mathcal{M}_{n-2}}&B_{in-i-1}^{n-1}&0_{\mathcal{ M}_{n-2}} & \cdots & 0_{\mathcal{M}_{n-2}}\\
0_{\mathcal{M}_{n-2}} & \cdots & \cdots & 0_{\mathcal{M}_{n-2}}& \cdots & \cdots & 0_{\mathcal{M}_ {n-2}}\\
0_{\mathcal{M}_{n-2}} & \ddots & \ddots & \vdots& \ddots & \ddots & \vdots \\
\vdots&\ddots&\ddots & \vdots & \cdots & \cdots & 0_{\mathcal{M}_{n-2}}\\
0_{\mathcal{M}_{n-2}}&\cdots&0_{\mathcal{M}_{n-2}}& 0_{\mathcal{M}_{n-2}} &0_{\mathcal{ M}_{n-2}} & \cdots & B_{n^2-n-1}^{n-1}\\
\end{array}\right)\in\mathcal{M}_{n(n-2)}\\
\end{align*}
We then say that $A^*_{n-2}$ is the transformation of $A$ into a game with $n$ populations playing a game with $n-2$ strategies, similarly, by iterating we can find $A^*_m$.
\end{proof}
\begin{prop} If for all $i,j\in S$ we have $\rho_{ij}(\pi,x)=\rho_{ji}(\pi,x)$ and $R_{ij}= R_{ji}$, then for any $n$ strategy population game with a fully supported revision protocol one can define a stationary distribution for the evolutionary process $(X^N_t)$ on $\mathcal{X}^ N$.
\end{prop}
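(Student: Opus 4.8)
The plan is to read this final statement as the capstone that fuses the three-strategy result (Proposition 1.2) and the multi-population result (Proposition 2.1) with the strategy-reduction machinery of Section 3. Since Propositions 3.1 and 3.2 were built precisely to strip off one strategy at a time while keeping the game faithfully represented, the natural argument is a reduction to the already-solved case rather than a fresh computation. I would not attempt to construct $\mu^N$ directly for general $n$; instead I would push $A$ down to games whose per-population strategy count is at most $3$, where stationarity is already guaranteed, and then pull the stationary law back.

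First I would dispose of the base cases: for $n=2$ the chain $(X^N_t)$ is a birth–death process and is reversible by theorem 11.2.3 of \cite{San10}, and for $n=3$ the conclusion is exactly Proposition 1.2, so I may assume $n>3$. Then I would apply Proposition 3.2 with target $m=2$ (equivalently, iterate Proposition 3.1 and stop as soon as every population carries $2$ or $3$ strategies), obtaining a game $A^*$ made of a finite family of populations, each with $n^p\in\{2,3\}$, that represents the same game as $A$ in the precise sense of Section 3 (the coordinate map $f$ and the identifications of $X$, $F$, $\rho$, $R$). Under this identification the evolutionary process of $A$ is the same Markov chain as that of $A^*$ on $\mathcal{X}^N$. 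Next I would invoke Proposition 2.1 on $A^*$: because every population has $n^p\in\{2,3\}$ and the protocol is full support, it delivers a stationary distribution $\mu^N$ in explicit product form, each factor being the reversible birth–death law of theorem 11.2.3 of \cite{San10} evaluated with the transformed payoffs $F^{*i}$. Finally, since $A^*$ and $A$ are the same chain under the bijection of Section 3, $\mu^N$ transfers back to a stationary distribution for $(X^N_t)$ on $\mathcal{X}^N$, which is the claim.

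The hard part will be checking that the hypotheses actually survive the transformation, and here there is a subtlety worth flagging. The transformation of Proposition 3.1 does \emph{not} preserve the symmetry $\rho_{ij}=\rho_{ji}$ verbatim, since the reverse rates pick up a factor $\tfrac12$; so I cannot, and need not, appeal to symmetry of $A^*$. What genuinely matters is, first, that $A^*$ remains full support, i.e. that each transformed conditional rate is bounded below by a positive constant given condition (1) for the original rates; this holds because every transformed entry is a finite nonnegative combination of original rates containing at least one strictly positive term, but it should be written out rather than asserted. Second, and more delicate, is the faithfulness of the representation at the level of the Markov process, which is what makes stationarity pull back: one must confirm that $f$ maps the support of $\mu^N$ onto $\mathcal{X}^N$ bijectively, reconciling the single normalization $\sum_i \mathcal{X}_i=1$ of the original one-population chain with the per-population constraints together with the global constraint $\sum_j \mathcal{X}_j=\mathfrak{p}$ that appears in Proposition 2.1, so that no mass is lost or double counted. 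This bookkeeping, rather than any analytic estimate, is where I expect the real work to lie.
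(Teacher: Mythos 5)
Your proposal takes essentially the same route as the paper: strip strategies off via the Section 3 transformations until every population plays a game with at most three strategies, then invoke the already-established 3-strategy/2-strategy results (ultimately Theorem 11.2.3 of \cite{San10}) and pull the stationary distribution back through the identification. The only cosmetic difference is that you package the last step through Proposition 2.1 while the paper reduces to exactly three strategies per population, applies the 3-to-2 transformation of Proposition 1.1, and cites Theorem 11.2.3 directly; the subtleties you flag (the factor $\tfrac12$ breaking literal symmetry, preservation of full support, and the bookkeeping needed to transfer $\mu^N$ back) are genuine, but the paper's own proof leaves them equally unaddressed.
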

\begin{proof}
Let $A$ be a symetric population game of $N$ agents with $n$ strategies with a fully supported revision protocol satisfying for all $i,j\in S$ that $\rho_{ij}(\pi,x) =\rho_{ji}(\pi,x)$. Then by applying propositions 4.1 and 4.2, we can define the game $A^*_3$ which is the transformation of $A$ into a game with $n$ populations playing a game with $3$ strategies. Then by applying Proposition 3.1, we can define the game $A^*$ which is the transformation of $A$. We can then apply theorem 11.2.3 of \cite{San} which gives us the stationary distribution of the set $A^*$ which is for its population $i\in\llbracket 1,n\rrbracket$: \begin{center}
$\frac{\mu^{N_i}_{\mathcal{X}_i}}{\mu^{N_i}_0}=\prod\limits^{N\mathcal{X}_i}_{j=1} \frac{N-j-1}{j}\cdot\frac{\rho^*_{i,\overline{i}}(F^{*i}(\frac{j-1}{N}),{ \frac{j-1}{N}})}{\rho^*_{\overline{i},i}(F^{*i}(\frac{j}{N}),{\frac{ j}{N}})}$ with $\mathcal{X}_i\in\lbrace 0,\frac{1}{N},\frac{2}{N},...,1\rbrace$ and $\sum\limits_{i=1}^n\mathcal{X}_i=1$
\end{center}
with $\mu^{N_i}_0$ determined by the requirement that $\sum\limits_{\mathcal{X}_i\in\mathcal{X}^{N_i}}\mu^{N_i}_{\mathcal {X}_i} = $1
\end{proof}
\textbf{Remark:} The chosen restriction is natural because it mainly consists of removing coordinates that appear several times.
\section{Symetric game with $\mathfrak{p}$ populations}
In this part we will show that proposition 3.3 is true for several populations, for this we will show a more general result by showing that we can find an infinite horizon whatever the number of strategies in the game of each population.\ \

\begin{prop}
Let $A$ be a symetric game with $\mathfrak{p}$ populations such that the population $p$ plays a game with $n^p$ strategies, its revision protocol has complete support and for all $i, j\in S$ we have $\rho^p_{ij}(\pi,x)=\rho^p_{ji}(\pi,x)$ and $R_{ij}=R_{ji}$ then we can define a stationary distribution for the evolutionary process $(X^N_t)$ on $\mathcal{X}^N$.
\end{prop}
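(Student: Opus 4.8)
The plan is to reduce the whole problem, one population at a time, to a single large two-strategy game and then invoke Theorem 11.2.3 of \cite{San10} exactly once, combining the mechanism underlying Proposition 3.3 (collapsing one population with many strategies) with that of Proposition 2.1 (assembling several two-strategy populations). First I would fix $p\in\llbracket 1,\mathfrak{p}\rrbracket$ and look at the game $A^p$ that population $p$ plays, which by hypothesis is symmetric and has a full-support revision protocol. Applying the iterated transformation of Proposition 3.2 I would bring $A^p$ from $n^p$ strategies down to $3$ strategies, and then applying the $3\to 2$ transformation of Proposition 1.1 I would obtain a two-strategy game $A^{*p}$ equivalent to $A^p$.

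At each transformation step the two properties that the next step and, ultimately, Theorem 11.2.3 require --- symmetry and full support --- must be checked to be inherited. Symmetry is preserved because the transformed rates are symmetric combinations of the original ones (this is exactly what the explicit formulas in the proofs of Propositions 3.1 and 1.1 record), and full support is preserved because each new rate $\rho^{*p}_{i,\overline{i}}$ and $\rho^{*p}_{\overline{i},i}$ is a positive sum of original rates, each bounded below by the constant $\underline{R}$ of condition (1). The block-diagonal form of every revision protocol produced guarantees that the freshly created ``clone'' populations do not interact.

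Next I would assemble the two-strategy games $A^{*p}$ over all $p$ into one two-strategy game $A^*$ carrying $\sum_{p=1}^{\mathfrak{p}}(\text{number of populations produced from } A^p)$ populations, exactly as in the proof of Proposition 2.1. Since the global protocol of $A^*$ is block diagonal across these populations, the chain on $\mathcal{X}^N$ decouples into independent reversible birth--death chains, and Theorem 11.2.3 of \cite{San10} furnishes the stationary law as the product over populations $i$ of \[ \frac{\mu^{N^p}_{\mathcal{X}_i}}{\mu^{N^p}_0}=\prod_{j=1}^{N^p\mathcal{X}_i}\frac{N^p-j-1}{j}\cdot\frac{\rho^*_{i,\overline{i}}(F^{*i}(\tfrac{j-1}{N^p}),\tfrac{j-1}{N^p})}{\rho^*_{\overline{i},i}(F^{*i}(\tfrac{j}{N^p}),\tfrac{j}{N^p})}, \] with each $\mu^{N^p}_0$ fixed by normalization.

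The main obstacle is the descent from $A^*$ back to the original process. Theorem 11.2.3 produces a stationary law on the enlarged state space of $A^*$, which carries redundant clone coordinates together with the per-population coupling constraints, so I would have to verify that its pushforward under the projection $f$ built in the uniqueness parts of Propositions 3.1 and 1.1 is genuinely stationary for $(X^N_t)$ on $\mathcal{X}^N$, and not merely the invariant law of an auxiliary chain. This is precisely the difficulty flagged in the remark after Proposition 1.2 about passing to a mixture model to return to the original dimension: the transformations are bijective on the deterministic data, but one must still check that reversibility and the detailed-balance relations transport correctly through $f$ so that the marginalized product measure solves $\mu^N P^N=\mu^N$ for the original game.
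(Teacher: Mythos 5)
Your proposal takes essentially the same approach as the paper: reduce each population's game to an equivalent two-strategy game (the paper invokes Proposition 3.3, whose own proof is exactly your chain of Proposition 3.2 followed by Proposition 1.1), assemble these into a single two-strategy game $A^*$ with $n=\sum_k n^k$ populations as in Proposition 2.1, and apply Theorem 11.2.3 of \cite{San10} to write down the per-population product-form stationary distribution. The ``main obstacle'' you flag at the end --- verifying that the stationary law of $A^*$ pushes forward under the projection to a genuinely stationary law for the original process $(X^N_t)$ on $\mathcal{X}^N$ --- is left equally unaddressed by the paper, whose proof stops at the distribution of $A^*$ (cf.\ the ``mixture model'' remark after Proposition 1.2), so your version is, if anything, the more honest account of what remains to be proved.
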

\begin{proof}
Let $p\in\llbracket 1,\mathfrak{p}\rrbracket~A^p$ be the game played by the population $p$. Applying Proposition 3.3, let $p\in\llbracket 1,\mathfrak{p}\rrbracket~A^{*p}$ be the transformation into a 2-strategy game of the game played by the population $p$. By setting $A^*$ the set of these games, we then end up with a game $A^*$ which is a 2-strategy game with $\sum\limits_{k=1}^\mathfrak{p} n^k=n$ populations, and $A^*$ represents the same game as $A$. We can then apply Theorem 11.2.3 of \textbf{[San10]} which gives us the stationary distribution of the set $A^*$ which is for its population $i\in\llbracket 1,n\rrbracket$ such that $ \exists p\in\llbracket 1,\mathfrak{p}\rrbracket$ such that $i\in\llbracket \sum\limits_{k=0}^{p-1}n^k+1,\sum\limits_ {k=0}^{p}n^k\rrbracket$ (with $n^0=0)$: \begin{center}
$\frac{\mu^{N_i}_{\mathcal{X}_i}}{\mu^{N_i}_0}=\prod\limits^{N^p\mathcal{X}_i}_{j= 1}\frac{N^p-j-1}{j}\cdot\frac{\rho^*_{i,\overline{i}}(F^{*i}(\frac{j-1}{N ^p}),{\frac{j-1}{N^p}})}{\rho^*_{\overline{i},i}(F^{*i}(\frac{j}{ N^p}),{\frac{j}{N^p}})}$ with $\mathcal{X}_i\in\lbrace 0,\frac{1}{N^p},\frac{2 }{N^p},...,1\rbrace$ and $\sum\limits_{i=n^{p-1}+1}^{n^p}\mathcal{X}_i=1$ and $\sum\limits_{j=1}^{n}\mathcal{X}_j=\mathfrak{p}$
\end{center}
with $\mu^{N_i}_0$ determined by the requirement that $\sum\limits_{\mathcal{X}_i\in\mathcal{X}^{N_i}}\mu^{N_i}_{\mathcal {X}_i} = $1
\end{proof}
~~\\

\textbf{Remark:} Not all the properties of this article are subject to the symmetry condition of the protocol and the revision rate, so if the infinite horizon of 3-strategy games is found by applying the same properties, we will have the infinite horizons of all games on the condition of assuming the inertia and myopia of the players.

\section*{Acknowledgement}
I would like to thank Professor Jean-Réné Chazottes for introducing me to the dynamics of evolutionary play, which is a fascinating subject, as well as for the advice he gave me to carry out this work.
I would also like to thank Théophile Truchis for the time we were able to work in threes on the subject.

\end{document}